\documentclass[11pt,a4paper]{article}
\usepackage[T1]{fontenc}
\usepackage[utf8]{inputenc}
\usepackage{lmodern}
\usepackage{geometry}
\usepackage{amsmath,amssymb,amsthm,mathtools}
\usepackage{booktabs}
\usepackage{microtype}
\usepackage[colorlinks=true,linkcolor=blue,citecolor=blue,urlcolor=blue]{hyperref}
\usepackage{enumitem}
\numberwithin{equation}{section}

\newtheorem{theorem}{Theorem}[section]
\newtheorem{lemma}[theorem]{Lemma}
\newtheorem{corollary}[theorem]{Corollary}

\theoremstyle{remark}
\newtheorem{remark}[theorem]{Remark}

\newcommand{\Tr}{\operatorname{Tr}}
\newcommand{\trank}{\operatorname{trank}}
\newcommand{\hdim}{\operatorname{dim}}
\newcommand{\VC}{\operatorname{VCdim}}
\newcommand{\poly}{\operatorname{poly}}
\newcommand{\SAT}{\textsc{SAT}}
\newcommand{\MMHS}{\textsc{Maximum Minimal Hitting Set}}

\newcommand{\cG}{\mathcal{G}}
\newcommand{\cH}{\mathcal{H}}
\newcommand{\cL}{\mathcal{L}}

\title{Lower Bounds for Minimal-Transversal Enumeration\\
       of Bounded-Dimension Hypergraphs and the maximum Minimal Hitting Set Problem}
\author{Arnaud Mary}
\date{October 2026}

\begin{document}
\maketitle

\begin{abstract}

We present lower bounds for the maximum minimal hitting set problem and for the problem of
enumeration of minimal transversals in bounded-dimension hypergraphs.
Our starting point is the recent reduction of Kobayashi,
Kurita, and Wasa (2026), developed to establish ETH-based hardness of
hypergraph dualization. Adapting their construction, we show that,
under ETH, deciding whether a hypergraph $\cG$ with $n$ vertices and $m$ 
hyperedges has a minimal transversal of size at least $k$
cannot be done in time $n^{f(k)}m^{o(k)}$, for an arbitrary computable $f$.
We transfer this bound to the following decision problem: given a
dimension-at-most-$r$ hypergraph and a family of $s$ of its minimal
transversals, decide whether another exists. Under ETH, this problem
admits no uniform algorithm with running time $n^{f(r)}(s+1)^{o(r)}$.
Under SETH, for every fixed $t \geq 2$, it admits no algorithm with running
time $\poly(|\cG|)(s+1)^c$ for dimension at most $2t$ and any $c<t$.
As corollaries, the same exponent obstructions hold for the total time
of any output-polynomial enumeration algorithm, with $s$ replaced by the
total number of minimal transversals. In particular,
polynomial delay is excluded under SETH already for dimension at most four.

\end{abstract}

\section{Introduction}

Given a hypergraph $\cH$, the problem of enumerating its minimal
transversals, also called minimal hitting sets, is equivalent to dualization
of monotone Boolean functions. Its general output-polynomial complexity has
been open for long time. The classical duality algorithm of Fredman and Khachiyan
runs in quasipolynomial time
\cite{FredmanKhachiyan1996,EiterMakinoGottlob2008}. Very recently,
Kobayashi, Kurita, and Wasa \cite{KKW2026} proved  that that under the Exponential Time Hypothesis (ETH),
no output-polynomial time algorithm exists to enumerating minimal transversals in general hypergraphs. 

In this paper we prove that their reduction is flexible enough to yield lower bounds for the enumeration of minimal transversals in 
hypergraphs of bounded dimension, i.e. in hypergraphs where the maximum size of any hyperedge is bounded.

For bounded input dimension $k$, minimal transversals admit incremental-polynomial
enumeration \cite{EiterGottlob1995,EiterGottlobMakino2003}. Similarly, the work in
\cite{Mary2026} established incremental-polynomial enumeration for bounded
VC-dimension, unifying earlier tractable classes, including
bounded dimension, bounded hyperedge intersections, and several sparse and geometric
families \cite{KhachiyanEtAl2007}. The exponent of the number of generated
solutions in these algorithms depends on the dimension or the VC-dimension.
Whether polynomial delay can be achieved for each fixed dimension $k \geq 3$
was open.

Independently of enumeration problems, we consider the problem of deciding whether a hypergraph $\cH$ has a minimal transversal of size at least $k$.
This problem, called the \MMHS  problem (MMHS), or sometimes the \emph{upper transversal problem} or \emph{transversal rank problem}, 
is known to be NP-hard and W[1]-hard when parameterized by $k$ 
More recent work in \cite{AraujoBougeretCamposSau2023} 
show that under ETH, MMHS cannot be solved in time $(n+m)^{o(k)}$ where $n$ is the number of vertices, $m$ is the number of hyperedges.
. 

We establish two types of conditional lower bounds.

\begin{enumerate}
  
\item We first prove under ETH that the maximum minimal hitting set problem cannot be solved in time $n^{f(k)}m^{o(k)}$ for any computable function $f$. 
Using the small-transversal construction underlying the equivalences
in \cite{Schirneck2026}, we obtain a lower bound for the associated
decision problem: given a hypergraph of dimension at most $r$ and
$s$ of its minimal transversals, decide whether a further one exists.
Under ETH, no uniform algorithm solves this problem in time
$n^{f(r)}(s+1)^{o(r)}$. Consequently, no uniform algorithm enumerates
all minimal transversals in total time $n^{f(r)}(Q+1)^{o(r)}$, where
$Q$ is the total number of minimal transversals.
While this forbids a uniform output-polynomial bound whose exponent
in the number of solutions is independent of $r$, it does
not exclude even polynomial-delay enumeration for each fixed dimension $k$, and for instance one might still be able to find a polynomial-delay algorithm for hypergraphs of dimension at most say $1000$ without contradicting ETH.

\item Second, using the same reduction to fixed-width $d$-SAT, 
we derive a lower bound  based on the Strong Exponential Time Hypothesis (SETH) for the same problem. We prove that under SETH, for every fixed $k \geq 2$,
the problem of deciding whether a hypergraph has a minimal transversal of size at least $2k+1$ cannot be solved in time $n^{f(k)}m^{c}$ for any computable function $f$ and any constant $c<k$.
In consequence, for every fixed $k \geq 2$, deciding whether a supplied
family of $s$ minimal transversals is incomplete cannot be done in
time $\poly(|\cG|)(s+1)^c$ for any $c<k$, even when
$\hdim(\cG) \leq 2k$. As a corollary, no enumeration algorithm on this
class has total running time $\poly(|\cG|)(Q+1)^c$ for $c<k$.
This applies to any output-polynomial algorithm, without an incremental
time guarantee. In particular, for every fixed $k \geq  4$ no polynomial delay algorithm exists to enumerate minimal transversals of hypergraphs of dimension at most $k$.

\end{enumerate}

Capelli and Strozecki proved under ETH that polynomial-delay enumeration is
strictly less powerful than incremental-polynomial enumeration, using a
hierarchy of padded satisfiability problems
\cite{CapelliStrozecki2019}. However, a natural enumeration problem witnessing
this separation was still missing. To the best of our knowledge, our results
provide the first such example: under SETH, for every fixed dimension bound
$r \geq 4$, minimal transversals of hypergraphs of dimension at most $r$ can be
enumerated in incremental-polynomial time, but not with polynomial delay.

Another consequence is that already known incremental-polynomial time enumeration algorithms for bounded dimension and bounded VC-dimension are essentially optimal, in the sense that even for output-polynomial enumeration the exponent of the total number of solutions must be linear in the dimension and VC-dimension.

\section{Preliminaries and complexity conventions}

A hypergraph $\cH=(V,E)$ has $n=|V|$ vertices and $m=|E|$
(hyperedges). A transversal (or hitting set) $T\subseteq V$ is a subset of vertices that intersects every hyperedge,
it is \emph{minimal} if no proper subset is a transversal. We write
$\Tr(\cH)$ for the hypergraph whose hyperedges are the minimal
transversals of $\cH$. This hypergraph is called the \emph{transversal hypergraph} of $\cH$ or the \emph{dual} of $\cH$. The \emph{transversal rank} of a hypergraph $\cH$, noted $\trank(\cH)$, is defined as the dimension of its dual hypergraph, i.e. the maximum size of a minimal
transversal of $\cH$:
\[
  \trank(\cH)
    =\max\{|T|:T\in\Tr(\cH)\}.
\]
The \MMHS (MMHS) problem asks, given $(\cH,k)$, whether
$\trank(\cH) \geq  k$. In the literature this is also called
the \emph{upper transversal} problem or the \emph{transversal rank}
problem \cite{AraujoBougeretCamposSau2023,Schirneck2026}.

The \emph{dimension} $\hdim(\cH)$, also called the \emph{rank}
in the literature, is the largest cardinality of an input hyperedge..
 The VC-dimension $\VC(\cH)$ is the maximum size of a shattered vertex subset. In
particular,
\begin{equation}\label{eq:VCdimension}
 \VC(\cH) \leq \hdim(\cH).
\end{equation}
A hypergrpaph is \emph{Sperner} if no hyperedge contains another. 
Throughout this paper, except when explicitly stated otherwise, we assume that hypergraphs are Sperner.
We use $\cH_{\min}$ for the hypergraph of inclusion-minimal hyperedges
of $\cH$. It satisfies $\Tr(\cH_{\min})=\Tr(\cH)$,
A famous duality property of of the transversal hypergraph \cite{Berge1989} states that
\begin{equation}\label{eq:duality}
 \Tr(\Tr(\cH))=\cH_{\min}.
\end{equation}

\subsection{Enumeration Probelms}

An enumeration problem associates with every input $x$ a finite set
$\mathrm{Sol}(x)$ and asks to output every element of $\mathrm{Sol}(x)$ exactly
once. Since the number of solutions may be exponential in
the input size, the standard complexity measures are output-sensitive. An
algorithm runs in \emph{output-polynomial time} if its total running time is
polynomial in $|x|$ and in the total size of $\mathrm{Sol}(x)$. It runs in
\emph{incremental-polynomial time} if, for every $i$, it outputs $i$ distinct
solutions, or outputs all solutions and certifies and certify that they are fewer than $i$, in time polynomial in $|x|$ and $i$. Finally, it has
\emph{polynomial delay} if the time before the first output, between any two
consecutive outputs, and between the last output and termination is bounded by
a polynomial in $|x|$, independently of the number of solutions already
generated. These notions define the classes
$\mathrm{DelayP}\subseteq\mathrm{IncP}\subseteq\mathrm{OutputP}$
\cite{JohnsonPapadimitriouYannakakis1988,CapelliStrozecki2019}.

For our quantitative lower bounds, the \emph{new-transversal decision
problem} takes as input a hypergraph $\cG$ and an explicitly listed
family $\mathcal S\subseteq\Tr(\cG)$, and asks whether
$\Tr(\cG)\setminus\mathcal S\ne\varnothing$. Thus the supplied
sets are promised to be distinct minimal transversals. We write
$s=|\mathcal S|$ and measure decision time as
$p(|\cG|)(s+1)^c$, where $|\cG|$ denotes the explicit encoding size
of the hypergraph alone and $p$ is a polynomial.

For enumeration, we write $Q=|\Tr(\cG)|$. A \emph{total-time} bound
$p(|\cG|)(Q+1)^c$ includes all outputs and termination, but imposes
no bound on when individual solutions appear. Since every output has
at most $n$ vertices, counting solutions rather than their total
encoding size changes only the polynomial input-size factor.
These are quantitative restrictions on output-polynomial algorithms,
not claims that output-polynomial enumeration is impossible for a
fixed dimension. We also retain the delay convention: a bound
$p(|\cG|)(i+1)^\delta$ on the delay after $i$ solutions have been generated,
including the initial and final delays, implies, for $\delta \geq 0$, a total-time
bound $O(p(|\cG|)(Q+1)^{\delta+1})$. Polynomial delay is the special case
$\delta=0$ \cite{CapelliStrozecki2019}.

\subsection{ETH and SETH}

We use the standard Exponential Time Hypothesis (ETH), excluding a
$2^{o(N)}$-time algorithm for $3$-SAT on $N$ variables, and the Strong
Exponential Time Hypothesis (SETH). Let $s_d$ be the infimum over
$\alpha$ for which $d$-SAT can be decided in time
$2^{\alpha N}\poly(N)$. SETH states that $\sup_d s_d=1$
\cite{ImpagliazzoPaturi2001,ImpagliazzoPaturiZane2001}.

\section{Adaptation of the reduction of Kobayashi, Kurita and Wasa}

The following black-box lemma extracts from the construction and proofs
in \cite[Section~2, in particular Lemmas~3-7]{KKW2026} exactly the
properties needed below. Its statement for arbitrary fixed clause width
$d$ is a direct extension of the stated $3$-CNF construction: the proof of
the transversal characterization uses only the block selectors, the
original clause constraints, and the polynomial-evaluation labels.
Clause width is relevant only in bounding the number of original
constraints and works for any fixed size $d \geq 3$-clauses.

\begin{lemma}\label{lem:padding}
Fix integers $d \geq 3$ and $t \geq 1$. For all sufficiently large $N$,
there is a deterministic construction that maps any $d$-CNF formula
$\varphi$ on $N$ variables to a hypergraph
$\cH=\cH(\varphi,d,t)$ such that
\begin{equation}\label{eq:paddingcorrect}
 \varphi\in\SAT
    \quad\Longleftrightarrow\quad
       \trank(\cH) \geq 2t+1.
\end{equation}
Writing $n_{\cH}=|V(\cH)|$ and
$m_{\cH}=|E(\cH)|$, for fixed $d,t$,
\begin{align}
 n_{\cH}&=N^{O(dt)},\label{eq:paddingN}\\
 m_{\cH}& \leq  2^{\frac{N}{t}+o(N)}.\label{eq:paddingM}
\end{align}
The hypergraph can be constructed explicitly in time $2^{N/t+o(N)}$.
The $o(N)$ term may depend on the fixed integers $d,t$.
Here $\cH$ is not required to be Sperner; the MMHS lower bounds
below apply to arbitrary explicitly represented hypergraphs.
\end{lemma}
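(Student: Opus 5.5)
Since the lemma is extracted from \cite{KKW2026}, the plan is to reconstruct the KKW construction for $d$-CNF and check the three properties directly. Partition the $N$ variables into $t$ blocks $B_1,\dots,B_t$ of size about $N/t$. Each block has at most $2^{N/t}$ partial assignments. For each block $i$, introduce a \emph{block selector}: a gadget whose minimal transversals of the intended large shape encode exactly one partial assignment $\sigma_i$ of $B_i$. Direct enumeration of these assignments is what produces the hyperedge count $2^{N/t+o(N)}$. To keep the vertex count polynomial, namely $N^{O(dt)}$, assignments are not represented by individual vertices. Instead they are encoded by evaluations of low-degree polynomials over a field of size $\poly(N)$. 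These are the ``polynomial-evaluation labels'': a partial assignment becomes a codeword, and vertices correspond to (point, value) pairs, so the vertex set has polynomial size while the hyperedges, one per assignment or per forbidden pattern, are exponential in number.

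\textbf{Step 1 (construction).} Build, for each block, the selector vertices and hyperedges, together with two extra ``pivot'' vertices per block, plus one global vertex. The target size $2t+1$ is then realised as two vertices per block plus one global vertex. For each original clause $C$ of width at most $d$, note that $C$ touches at most $d$ blocks. For every combination of partial assignments on those blocks that falsifies $C$, add a hyperedge that forbids the corresponding labels from coexisting in a minimal transversal of size $2t+1$. The number of such hyperedges is at most $m\cdot 2^{dN/t}$ naively. The count must be kept to $2^{N/t+o(N)}$, so I would let each clause hyperedge depend on only one block's assignment plus the polynomial labels of the others, which have $\poly(N)$ values. This is exactly where width $d$ enters: the label vertices involve $d$-tuples of evaluation points, giving the $N^{O(dt)}$ vertex bound.

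\textbf{Step 2 (correctness).} ($\Rightarrow$) Given a satisfying assignment, pick in each block the selector pair encoding $\sigma_i$, together with the global vertex. Check that this set is a transversal: every selector edge is hit because its edges are indexed by the other assignments, and no clause edge is violated. Then check minimality by exhibiting a private edge for each chosen vertex. ($\Leftarrow$) Given a minimal transversal of size at least $2t+1$, use the selector structure to argue that it must pick exactly one consistent pair per block; otherwise some vertex loses its private edge. Read off the assignments $\sigma_i$, and use the clause hyperedges to conclude that every clause is satisfied.

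\textbf{Step 3 (size and time).} Count vertices as $t$ blocks times the label space, which is $N^{O(d)}$ per coordinate, raised to at most $t$ levels, giving $N^{O(dt)}$. Count edges as $t\cdot 2^{N/t}\cdot \poly(N)^{O(dt)}=2^{N/t+o(N)}$. The construction time follows because each edge is produced in polynomial time.

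The main obstacle is the ($\Leftarrow$) direction of Step 2: ruling out large ``spurious'' minimal transversals that mix labels from different assignments. Here I would follow KKW's argument, namely that two distinct low-degree polynomials agree on few points, so an inconsistent selection leaves some vertex without a private edge. I expect this argument to be unchanged for general $d$, since only the count of clause constraints depends on $d$.
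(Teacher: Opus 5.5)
There is a genuine gap: the architecture you reconstruct is not the KKW one, and as described it cannot work. You use $t$ blocks of $N/t$ variables, with the intended transversal being ``two vertices per block plus one global vertex.'' The forward direction then picks a pair of vertices encoding $\sigma_i$, and the reverse direction reads $\sigma_i$ off that pair. But a block has $2^{N/t}$ partial assignments, while there are only $\binom{n_{\cH}}{2}=N^{O(dt)}$ vertex pairs. So the map $\sigma_i\mapsto$ pair cannot be injective for large $N$, and the ``read off'' step in $(\Leftarrow)$ has nothing to read: a set of $2t+1$ vertices carries only $O(dt^2\log N)$ bits. The clause gadget (``let each clause hyperedge depend on only one block's assignment plus the polynomial labels of the others'') is also left unspecified, with no argument that it preserves correctness. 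Finally, your $(\Leftarrow)$ sketch does not control minimal transversals of size strictly larger than $2t+1$, which the equivalence must also handle.

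The missing idea is the role of the threshold. In the paper (via KKW) the blocks are \emph{small and numerous}. Each has $b=\lfloor t\log_2 q\rfloor$ variables, with $q$ a prime near $N^{d+2}$, so there are $s=\lceil N/b\rceil\to\infty$ blocks. Each partial assignment is simply a vertex, at most $q^t$ per block, which gives $n_{\cH}\le (N+1)q^t=N^{O(dt)}$ with no encoding trick. Satisfying assignments correspond to full one-per-block selectors of size $s>2t$.

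The exponential edge count does not come from enumerating a block's assignments. It comes from padding each of the $M\le q$ preliminary hyperedges with every vector of $\mathbb F_q^s$, giving $Mq^s=2^{N/t+O(N/\log N)}$ edges. The polynomial labels serve one purpose (KKW Lemma~6): any transversal containing two vertices of one block contains a transversal of size at most $2t$. By minimality, every minimal transversal of size at least $2t+1$ therefore meets each block at most once. Such transversals are exactly the valid selectors (KKW Lemma~5).

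So $2t+1$ is an upper cutoff on the spurious minimal transversals, not the size of the intended ones. Your plan inverts this and has no mechanism that makes the equivalence hold.
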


\begin{proof}
We extract the correctness statements from \cite[Lemmas~5-7]{KKW2026}.
If the formula contains an empty clause, output the hypergraph with one
vertex and one singleton hyperedge; its transversal rank is $1$, as required
for this unsatisfiable case.
After removing duplicate and tautological clauses, the number of clauses
of a $d$-CNF is $O_d(N^d)$. Let $q$ be a prime satisfying
$N^{d+2} \leq  q<2N^{d+2}$, for fixed $d$ a prime in this interval can be found
deterministically in $N^{O(d)}$ time. Set
\begin{equation}\label{eq:params}
 b=\lfloor t\log_2q\rfloor,
 \qquad s=  \left\lceil\frac{N}{b}\right\rceil.
\end{equation}
For fixed $d,t$ and sufficiently large $N$, the preliminary
hypergraph in \cite[Section~2.1]{KKW2026} has $s$ blocks of partial
assignments, each with at most $2^b \leq  q^t$ vertices, and at most
$M \leq  s+O_d(N^d) \leq  q$ preliminary hyperedges. For sufficiently large
$N$, we also have $1 \leq  b \leq  N$ and $2t<s$.
These are precisely the three hypotheses on the field size, label
degree, and number of blocks required by the padding construction
\cite[Section\;2.2]{KKW2026}.

After applying the same padding construction, by \cite[Lemma~5]{KKW2026}, any transversal meeting each block at most
once is a full selector representing a satisfying assignment, and every
such selector is minimal. By \cite[Lemma~6]{KKW2026}, any transversal
containing two vertices from one block has a transversal subset of
size at most $2t$. Therefore no minimal transversal of size greater
than $2t$ contains two vertices in the same block. The larger minimal
transversals are precisely the valid full selectors. Since such a
selector contains $s>2t$ vertices, this proves
\eqref{eq:paddingcorrect}.

The number of vertices is at most
$s2^b \leq (N+1)q^t=N^{O(dt)}$.
The padded hypergraph has at most $Mq^s$ hyperedges. Since
$b=t\log_2q+O(1)$ and $\log_2q=0(\log N)$,
\begin{align*}
 \log_2(Mq^s)
 & \leq  O(\log N)+\Bigl(\frac{N}{b}+1\Bigr)\log_2q\\
 &=\frac{N}{t}+O \left(\frac{N}{\log N}\right).
\end{align*}
Each padded hyperedge is generated by choosing a preliminary hyperedge and a vector
in $\mathbb F_q^s$ and evaluating the vertex labels, as in
\cite[Section~2.2]{KKW2026}. The work per hyperedge is polynomial in $N$
for fixed $d,t$, so the construction takes $2^{N/t+o(N)}$ time.
Duplicate hyperedges can be removed within the same bound by storing their
incidence vectors in a binary trie. No computation of the inclusion-minimal
edges is needed.
\end{proof}

\begin{remark}[Provenance and scope]
The explicit construction of $\cH$, including its padded
hyperedges, is not repeated here. The equivalence in
\eqref{eq:paddingcorrect} is the content of \cite[Lemma~7]{KKW2026}
under the admissible parameter conditions. 
\end{remark}

\section{ETH and SETH  hardness of maximum minimal hitting set}
\label{sec:mmhs}

Our first result treats the MMHS problem independently of enumeration.
It strengthens the usual $(n+m)^{\Omega(k)}$ ETH lower bounds
\cite{AraujoBougeretCamposSau2023,Schirneck2026} in an orthogonal
direction: it rules out moving \emph{all} parameter-dependent polynomial
complexity to the number of vertices. We then establish a SETH lower
bound for each fixed threshold $2t+1$, with $t \geq 2$. Both results
concern arbitrary explicitly represented hypergraphs, including those
that are not Sperner.

\begin{theorem}[ETH lower bound for MMHS]\label{thm:eth-mmhs}
Assume ETH. Let $a,f,g:\mathbb N\to\mathbb R_{ \geq 0}$ be computable,
with $g(k)=o(k)$. No uniform algorithm decides, on every $n$-vertex,
$m$-edge hypergraph, whether $\trank(\cH) \geq  k$ in time
\begin{equation}\label{eq:forbidden-mmhs}
 a(k)\,(n+1)^{f(k)}(m+1)^{g(k)}.
\end{equation}
In particular, the maximum minimal hitting set problem admits no running time
$n^{f(k)}m^{O(1)}$ under ETH, even for arbitrary computable $f$.
\end{theorem}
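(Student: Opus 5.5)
We argue by contradiction and use Lemma~\ref{lem:padding} with $d=3$ and a suitably chosen $t$. Suppose an algorithm $A$ decides $\trank(\cH)\geq k$ within time \eqref{eq:forbidden-mmhs}, where $g(k)=o(k)$. We will build a $2^{\varepsilon N}$-time algorithm for $3$-SAT for every fixed $\varepsilon>0$. Because the required $t$ depends on $\varepsilon$, we finish with the standard diagonal argument: computability of $a,f,g$ lets us choose $t$ as a slowly growing function of $N$, which turns the bounds $2^{\varepsilon N}$ for each $\varepsilon$ into a single $2^{o(N)}$ algorithm and contradicts ETH.

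\textbf{Step 1 (one fixed $t$).} Fix $t\geq 1$ and set $k=2t+1$. Given a $3$-CNF $\varphi$ on $N$ variables, construct $\cH=\cH(\varphi,3,t)$ in time $2^{N/t+o(N)}$, run $A$ on $(\cH,k)$, and answer according to \eqref{eq:paddingcorrect}. By \eqref{eq:paddingN} and \eqref{eq:paddingM}, the running time of $A$ is
\[
 a(2t+1)\,N^{O(t)f(2t+1)}\,2^{(N/t+o(N))\,g(2t+1)} .
\]
For fixed $t$, the factor $N^{O(t f(2t+1))}$ is $2^{o(N)}$, whatever $f$ is. This is exactly why all parameter dependence on $n$ is harmless: $n$ is only polynomial in $N$. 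The exponent of the last factor is $N\,g(2t+1)/t+o(N)$. Since $g(k)=o(k)$, we have $g(2t+1)/t\to 0$, so for every $\varepsilon>0$ there is a $t$ with $1/t+g(2t+1)/t<\varepsilon$. With that $t$, the total time is $2^{\varepsilon N+o(N)}$.

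\textbf{Step 2 (uniformization).} This step is the main obstacle. We must turn ``for every $\varepsilon$, time $2^{\varepsilon N+o_\varepsilon(N)}$'' into one algorithm running in time $2^{o(N)}$. The $o(N)$ terms in Lemma~\ref{lem:padding} depend on $t$, and the lemma applies only for $N$ large in terms of $t$. My first approach is to let $t=t(N)$ grow slowly. Define $t(N)$ as the largest $t\leq \log\log N$ such that a time-bounded computation, running in $O(N)$ steps, verifies that $a(2t+1)$, $f(2t+1)$ and $g(2t+1)$ are small enough. Concretely, we require $t\,f(2t+1)\log N\leq N/t$ and $g(2t+1)\leq t\,\delta_t$, where $\delta_t\to 0$ is a fixed computable sequence (for example, the worst ratio $g(j)/j$ over $j\geq 2t+1$ that the hypothesis $g=o(k)$ guarantees; if necessary we weaken this to a computable upper envelope). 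Then $t(N)\to\infty$ and the total time is $2^{O(N/t(N))+N\delta_{t(N)}+o(N)}=2^{o(N)}$. To avoid depending on the ``sufficiently large $N$'' threshold of the lemma, I would recheck the explicit parameter conditions from its proof, namely $b\leq N$ and $2t<s$. These hold as soon as $t\log N\ll N/\log N$, which the slow growth of $t(N)$ guarantees. I would also check that the hidden $O(N/\log N)$ error in \eqref{eq:paddingM} is uniform in $t$ up to factors polynomial in $t$.

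\textbf{Step 3 (the ``in particular'' clause).} A bound $n^{f(k)}m^{O(1)}$ is the special case $a\equiv 1$, $g\equiv c$ constant, and a constant satisfies $g=o(k)$. Finally, if one prefers to avoid the uniformization, an alternative is to fix $\varepsilon$ and appeal to the equivalent formulation of ETH, namely $s_3>0$. Step 1 alone gives $s_3\leq\varepsilon$ for every $\varepsilon>0$, hence $s_3=0$. This is the known equivalent of ETH failing, via the sparsification-style argument of Impagliazzo–Paturi, and it sidesteps most of Step 2.
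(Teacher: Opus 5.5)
Your Step~1 is the paper's proof: fix $t$, apply Lemma~\ref{lem:padding} with $d=3$, call the algorithm with $k=2t+1$, and use $g(2t+1)/t\to 0$ to get, for every $\varepsilon>0$, a fixed $t$ yielding a $2^{\varepsilon N}$-time $3$-SAT algorithm. You are in fact more careful than the paper at this point. The display \eqref{eq:satT} drops the $2^{N/t}$ construction cost, which dominates whenever $g(2t+1)<1$. Your condition $(1+g(2t+1))/t<\varepsilon$ accounts for it.

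You differ from the paper only in how you close the argument. The paper invokes Theorem~2.4 of \cite{CHEN20092641}, which passes from ``a $2^{\varepsilon N}$ algorithm for every $\varepsilon>0$'' to a contradiction with ETH, where ETH is formulated as the absence of a $2^{o(N)}$ algorithm. Your Step~3 is this route. Note that the equivalence it needs is exactly that uniformization result, not a sparsification argument.

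Your Step~2 proves the uniformization by hand instead. This has the merit of showing where computability of $a,f,g$ is actually consumed, namely in computing $t(N)$ under a clock. The price is that you must use Lemma~\ref{lem:padding} with non-constant $t=t(N)$. That is legitimate, because the lemma's proof gives explicit bounds:
\[
n_{\cH}\le (N+1)q^t, \qquad \log_2 m_{\cH}\le N/t+O\bigl(N/(t\log N)\bigr)+O(\log N),
\]
with constants depending only on $d$. The construction time is $2^{N/t+o(N)+O(t\log N)}$. The admissibility conditions $b\le N$, $2t<s$ and $M\le q$ hold for $t\le\log\log N$ once $N$ is large. You should also add $\log_2 a(2t+1)\le N/t$ to your clocked tests.

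One small repair is needed. $\sup_{j\ge 2t+1} g(j)/j$ need not be computable, but you do not need it. Take $\delta_t$ to be a rational upper approximation of $g(2t+1)/t$. Since $t(N)\to\infty$, this already forces $\delta_{t(N)}\to 0$, and the exponent becomes $O(N/t(N))+N\delta_{t(N)}+o(N)=o(N)$.
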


\begin{proof}
Assume an algorithm with the bound \eqref{eq:forbidden-mmhs} exists.
Fix an arbitrary integer $t \geq  1$. Apply Lemma~\ref{lem:padding}
with $d=3$ and this value of $t$ to a $3$-CNF formula with $N$
variables, and invoke the MMHS algorithm with $k=2t+1$.
The decision is correct by \eqref{eq:paddingcorrect}.
For every \emph{fixed} $t$, the total time, including the reduction,
is at most
\begin{align}
 2^{N/t+o(N)}
 +a(2t+1)N^{O(tf(2t+1))}
 2^{g(2t+1)(N/t+o(N))}
 =2^{\frac{g(2t+1)}{t}N+o(N)}.
 \label{eq:satT}
\end{align}
(The parameter-dependent constants and polynomial factors in $N$ are
absorbed into $o(N)$.) Since $g(k)=o(k)$,
\[
 \lim_{t\to\infty}
   \frac{g(2t+1)}t=0.
\]
For every $\varepsilon>0$, there is a \emph{fixed} integer $t$ for
which the algorithm \eqref{eq:satT} solves all sufficiently large
$3$-SAT instances within $2^{\varepsilon N}$ time.
By Theorem 2.4 of \cite{CHEN20092641}, this contradicts ETH.

\end{proof}

\begin{remark}[Relation to prior fine-grained bounds]
Schirneck \cite{Schirneck2026} gives a recognition algorithm of
running time\[
O(\Delta^{k-2}mn^{k-1}),
\]
where $\Delta \leq  m$ is the maximum vertex degree. Theorem~\ref{thm:eth-mmhs}
shows that the worst-case exponent of $m$ cannot be reduced to $o(k)$,
regardless of how rapidly the exponent of $n$ is allowed to grow with
$k$. It does not establish a particular positive constant in the
$\Omega(k)$ exponent of $m$.
\end{remark}

\begin{theorem}[SETH lower bound for MMHS]\label{thm:seth-mmhs}
Assume SETH. For every fixed integer $t \geq 2$ and every real constant
$c<t$, there is no algorithm deciding whether an $n$-vertex, $m$-edge
hypergraph $\cH$ satisfies $\trank(\cH) \geq 2t+1$ in time
\begin{equation}\label{eq:forbidden-seth-mmhs}
 a\,(n+1)^f(m+1)^c,
\end{equation}
where $a>0$ and $f \geq 0$ are arbitrary constants, allowed to depend on
$t$ and $c$.
\end{theorem}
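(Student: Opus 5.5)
The plan is to argue by contradiction with the reduction of Lemma~\ref{lem:padding}, now for fixed $t$ and growing clause width $d$. Suppose that for some fixed $t\ge 2$ and some $c<t$ an algorithm $\mathcal A$ decides $\trank(\cH)\ge 2t+1$ in time $a(n+1)^f(m+1)^c$. We may assume $c\ge 0$, since otherwise the bound only improves. Set $\alpha=c/t<1$.

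By SETH, $\sup_d s_d=1$. So we can fix a clause width $d\ge 3$ with $s_d>\alpha$. Note that $d$ depends only on $t$ and $c$, not on $N$.

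Given a $d$-CNF $\varphi$ on $N$ variables, we apply Lemma~\ref{lem:padding} with this $d$ and the fixed $t$. This produces $\cH=\cH(\varphi,d,t)$ in time $2^{N/t+o(N)}$, with $n_{\cH}=N^{O(dt)}$ and $m_{\cH}\le 2^{N/t+o(N)}$. We then run $\mathcal A$ on $\cH$ with threshold $2t+1$. By \eqref{eq:paddingcorrect} the answer is correct. The total running time is
\[
 2^{N/t+o(N)}+a\,N^{O(dtf)}\,2^{c(N/t+o(N))}
 =2^{\max(1,c)N/t+o(N)}.
\]
Here the $o(N)$ term depends only on the fixed $d,t,c,a,f$, and the polynomial factor $N^{O(dtf)}$ is absorbed into it.

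The delicate point is the additive reduction cost $2^{N/t}$. If $c\ge 1$, the exponent is $(c/t)N+o(N)=\alpha N+o(N)$. If $c<1$, it is $N/t+o(N)$, and $1/t\le 1/2<1$ because $t\ge 2$. In either case the exponent is $\beta N+o(N)$ with $\beta=\max(1,c)/t<1$; this is exactly where the hypothesis $t\ge 2$ is used. We should therefore choose $d$ with $s_d>\beta$ rather than merely $s_d>\alpha$, which SETH permits since $\beta<1$.

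Finally, pick $\beta<\beta'<s_d$. For all sufficiently large $N$ the procedure decides $d$-SAT in time $2^{\beta' N}$, and small $N$ can be handled by brute force within $\poly(N)$ overhead. Hence $s_d\le\beta'<s_d$, a contradiction.

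I expect the main obstacle to be bookkeeping rather than ideas. The reduction of Lemma~\ref{lem:padding} is valid only for sufficiently large $N$, and its $o(N)$ terms and the prime search may depend on $d$. Both are harmless, because $d$ is fixed before $N$ grows; I will state this order of quantifiers explicitly.
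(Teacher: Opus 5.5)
Your proposal is correct and follows the paper's proof: you apply Lemma~\ref{lem:padding} at fixed $t$, run the assumed algorithm, and compare the resulting $d$-SAT exponent with $s_d$. Choosing one $d$ with $s_d>\beta$, where the paper instead bounds every $s_d$, is only the contrapositive form. Your exponent $\beta=\max(1,c)/t$ is actually more accurate than the paper's $\rho=c/t$: the paper's bound silently drops the $2^{N/t}$ construction cost when $c<1$, and that cost is exactly where $t\ge 2$ is needed.
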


\begin{proof}
Suppose such an algorithm exists and assume that $0 \leq  c<t$.
Fix any clause width $d \geq 3$ and apply Lemma~\ref{lem:padding} to a
$d$-CNF formula on $N$ variables. For the finitely many sizes at which
the lemma's parameter conditions fail, use brute force. Otherwise,
\eqref{eq:paddingcorrect} allows us to decide satisfiability by testing
$\trank(\cH) \geq 2t+1$. Including the construction of
$\cH$, the running time is bounded by
\[
 2^{N/t+o(N)}
 +a\,(N^{O(dt)}+1)^f
       \bigl(2^{N/t+o(N)}+1\bigr)^c
  \leq  2^{\rho N+o(N)},
  \rho=\frac{c}{t}<1.
\]
Choose $\varepsilon>0$ with $\rho+\varepsilon<1$. For every fixed
$d$, this gives a $d$-SAT algorithm with running time
$O(2^{(\rho+\varepsilon)N}\poly(N))$. Since $\rho+\varepsilon$ does not depend on $d$, 
$s_d \leq \rho+\varepsilon$ for every $d \geq 3$, and hence
$\sup_d s_d<1$, contradicting SETH.
\end{proof}


\section{The new-transversal decision problem: the uniform ETH bound}

We now transfer the MMHS lower bounds of Section~\ref{sec:mmhs} to
the new-transversal decision problem. The construction uses the small
minimal transversals of the input hypergraph, following the approach
underlying \cite[Theorem~4 and Section~5]{Schirneck2026}. We give a
direct proof that preserves the promise on the supplied family and
keeps its size at most the number of input hyperedges. We then derive
an enumeration corollary using only a total-time bound.

\begin{lemma}[A duality certificate]\label{lem:certificate}
Let $\cH$ be any hypergraph and
$\cL\subseteq\Tr(\cH)$. Then
\begin{equation}\label{eq:certificate}
 \cL\ne\Tr(\cH)
 \quad\Longleftrightarrow\quad
 \Tr(\cL)\setminus E(\cH)\ne\varnothing.
\end{equation}
\end{lemma}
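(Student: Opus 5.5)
We prove the two implications separately. The only tools needed are the duality property \eqref{eq:duality}, $\Tr(\Tr(\cG))=\cG_{\min}$, and the fact that every hyperedge of $\cH$ meets every minimal transversal of $\cH$.

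\emph{($\Leftarrow$), in contrapositive form.} Assume $\cL=\Tr(\cH)$. By \eqref{eq:duality}, $\Tr(\cL)=\Tr(\Tr(\cH))=\cH_{\min}$, and $\cH_{\min}\subseteq E(\cH)$. Hence $\Tr(\cL)\setminus E(\cH)=\varnothing$. This direction does not use the Sperner assumption, because $\cH_{\min}$ consists of actual hyperedges of $\cH$.

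\emph{($\Rightarrow$).} Assume $\cL\neq\Tr(\cH)$ and pick $T\in\Tr(\cH)\setminus\cL$. Suppose, for contradiction, that $\Tr(\cL)\subseteq E(\cH)$.
\begin{enumerate}
\item Every member of $\Tr(\cL)$ is then a hyperedge of $\cH$. The minimal transversal $T$ meets every hyperedge of $\cH$, so $T$ is a transversal of $\Tr(\cL)$.
\item Consequently $T$ contains a minimal transversal of $\Tr(\cL)$, that is, a member of $\Tr(\Tr(\cL))$.
\item The family $\cL$ is Sperner, since it is a subfamily of $\Tr(\cH)$, whose members are pairwise incomparable minimal transversals. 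So \eqref{eq:duality} gives $\Tr(\Tr(\cL))=\cL_{\min}=\cL$, and there is some $L\in\cL$ with $L\subseteq T$.
\item $L$ is a transversal of $\cH$ and $T$ is a minimal transversal, so $L=T$. This gives $T\in\cL$, a contradiction.
\end{enumerate}

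The main obstacle is the degenerate cases in which duality conventions matter, so these should be checked explicitly against the conventions $\Tr(\varnothing)=\{\varnothing\}$ and $\Tr(\{\varnothing\})=\varnothing$.
\begin{itemize}
\item If $\cL=\varnothing$ in the ($\Rightarrow$) direction, then $\Tr(\cL)=\{\varnothing\}$. Any $T$ contains $\varnothing$, so we must conclude that $\varnothing\in\cL$, which is false. The contradiction must instead come from the assumption $\varnothing\in E(\cH)$. But $\varnothing\in E(\cH)$ forces $\Tr(\cH)=\varnothing=\cL$, which contradicts $\cL\neq\Tr(\cH)$. So the case is consistent.
\item If $E(\cH)=\varnothing$, then $\Tr(\cH)=\{\varnothing\}$, and the same conventions make both sides agree.
\end{itemize}
Apart from these checks, the argument is the two short directions above.
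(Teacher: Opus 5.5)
Your proof is correct. The paper states this lemma without any proof and implicitly treats it as a consequence of \eqref{eq:duality}, so your argument supplies exactly what is omitted: the contrapositive of ($\Leftarrow$) via $\Tr(\Tr(\cH))=\cH_{\min}\subseteq E(\cH)$, and ($\Rightarrow$) via $T$ being a transversal of $\Tr(\cL)$, hence containing some $L\in\Tr(\Tr(\cL))=\cL$, hence $L=T$.

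Your degenerate-case paragraph is unnecessary, because the four steps already go through verbatim once \eqref{eq:duality} is read with the conventions $\Tr(\varnothing)=\{\varnothing\}$ and $\Tr(\{\varnothing\})=\varnothing$. Its first bullet is also slightly garbled. For $\cL=\varnothing$ one has $\Tr(\Tr(\cL))=\varnothing$, so nothing ever forces $\varnothing\in\cL$. The contradiction is simply that $\Tr(\cL)\subseteq E(\cH)$ puts the empty edge into $\cH$, and the transversal $T$ cannot meet it.
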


\begin{lemma}[From new-transversal decision to MMHS]
\label{lem:bounded-output}
Let $r \geq 1$. Suppose the new-transversal decision problem on
dimension-at-most-$r$ hypergraphs with $n$ vertices and supplied
families of size $s$ can be solved in time $D_r(n,s)$, where
$D_r(n,s)$ is nondecreasing in $s$.
Then, for any $n$-vertex, $m$-edge hypergraph $\cH$, deciding
whether $\trank(\cH) \geq  r+1$ takes time
\begin{equation}\label{eq:transform-time}
 O\bigl((n+1)^{r+2}(m+1)\bigr)+D_r(n,m).
\end{equation}

\end{lemma}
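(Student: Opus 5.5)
The plan is to build, in the stated time, a dimension-at-most-$r$ hypergraph $\cG$ and a family $\mathcal S\subseteq\Tr(\cG)$ with $|\mathcal S|\le m$, such that $\Tr(\cG)\setminus\mathcal S\ne\varnothing$ holds exactly when $\trank(\cH)\ge r+1$. We take $\cG=\cL$, the hypergraph whose edges are all minimal transversals of $\cH$ of size at most $r$. We take $\mathcal S$ to be the set of distinct hyperedges $e\in E(\cH)$ that are minimal transversals of $\cL$. Then $\hdim(\cL)\le r$, $\mathcal S\subseteq\Tr(\cL)$ by construction, and $|\mathcal S|\le m$. Since $D_r$ is nondecreasing in $s$, a single call costs at most $D_r(n,m)$.

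For correctness we split into two cases. Suppose first that $\trank(\cH)\le r$. Then $\cL=\Tr(\cH)$, so by \eqref{eq:duality} $\Tr(\cL)=\cH_{\min}$. Every edge of $\cH_{\min}$ is an edge of $\cH$ and is a minimal transversal of $\cL$, so it lies in $\mathcal S$. Conversely, $\mathcal S\subseteq\Tr(\cL)=\cH_{\min}$. Hence $\mathcal S=\Tr(\cL)$ and the decision procedure answers ``no''.

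Suppose instead that $\trank(\cH)\ge r+1$. Then $\cL\subsetneq\Tr(\cH)$, and Lemma~\ref{lem:certificate} provides a set in $\Tr(\cL)\setminus E(\cH)$. This set is not in $\mathcal S\subseteq E(\cH)$, so the procedure answers ``yes''.

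Degenerate inputs are treated directly. If $E(\cH)=\varnothing$ or $\varnothing\in E(\cH)$, the answer is read off immediately. I would check that the empty-$\cL$ case needs no special treatment under the convention $\Tr(\varnothing)=\{\varnothing\}$, and otherwise handle it by hand.

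The running-time accounting is the step I expect to need the most care. The main pitfall is to avoid computing $\cH_{\min}$, which would cost about $m^2n$; the definition of $\mathcal S$ above is chosen for this reason. The costs are as follows.
\begin{itemize}
\item Building $\cL$: we enumerate the at most $(n+1)^r$ subsets $T$ with $|T|\le r$. For each, we test whether $T$ and each $T\setminus\{v\}$ is a transversal, which is $O(r)$ transversal tests of cost $O(nm)$ each. The total is $O((n+1)^{r+2}(m+1))$.
\item Building $\mathcal S$: for each of the $m$ edges $e$, we test whether $e$ and each $e\setminus\{v\}$ is a transversal of $\cL$. These are $O(n)$ tests of cost $O(n|\cL|)=O(n(n+1)^r)$ each. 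The total is again $O((n+1)^{r+2}(m+1))$.
\item Deduplication: removing repeated edges of $\mathcal S$ by sorting or a trie fits in the same bound.
\end{itemize}
Adding the single call to the decision algorithm gives \eqref{eq:transform-time}.
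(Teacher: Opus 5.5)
Your proof is correct and follows the paper's route: build $\cL=\{S\in\Tr(\cH):|S|\le r\}$ in the stated time, use duality and Lemma~\ref{lem:certificate}, and make a single call to the decision procedure on $\cL$ with at most $m$ edges of $\cH$ as the supplied family. Your filtering $\mathcal S=E(\cH)\cap\Tr(\cL)$ is the right way to handle edges of $\cH$ that lie outside $\Tr(\cL)$, and it is the family the paper itself uses in the proof of Theorem~\ref{thm:seth-fixed}. The paper's proof of this lemma instead answers ``no'' outright when such an edge exists, and that is wrong: for $r=1$ and $E(\cH)=\{\{a,b\},\{a,c\}\}$ we have $\Tr(\cL)=\{\{a\}\}$, which contains no edge of $\cH$, yet $\trank(\cH)=2$.
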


\begin{proof}
Assume $1 \leq  r<n$ and that $\cH$ has
at least one hyperedge.
Test every vertex subset of size at most $r$ for minimal transversality
and form the Sperner hypergraph
\begin{equation}\label{eq:Ldefinition}
 \cL=\{S\in\Tr(\cH):|S| \leq  r\}.
\end{equation}
This can be done in time $O((n+1)^{r+2}(m+1))$. 
The dimension of $\cL$ is at most $r$, and its explicit encoding size is
$(n+1)^{r+O(1)}$, independently of $m$.

By definition, $\trank(\cH) \geq  r+1$ if and only if
$\cL\ne\Tr(\cH)$. Lemma~\ref{lem:certificate} turns
this into a test for an element of $\Tr(\cL)$ outside
$E(\cH)$.  If any hyperedge of $\cH$ is not a minimal transversal of $\cL$, we can directly answer no, so assume that $E(\cH)\subseteq\Tr(\cL)$.

Consequently, the new-transversal instance $(\cL,E(\cH))$
has answer \textsc{Yes} exactly when
$\trank(\cH) \geq  r+1$. Applying the assumed decision
algorithm proves \eqref{eq:transform-time}. 
\end{proof}

Now from  Lemma~\ref{lem:bounded-output} and Theorem~ \ref{thm:eth-mmhs} we obtain the following corollary.

\begin{corollary}[Uniform new-transversal lower bound under ETH]
\label{thm:eth-decision}
Assume ETH. There are no computable nonnegative functions
$A,F,G$ with $G(r)=o(r)$ and a uniform algorithm solving the
new-transversal decision problem on every dimension-at-most-$r$
hypergraph $\cG$ with $n$ vertices and every supplied family
$\mathcal S\subseteq\Tr(\cG)$ of size $s$ within
\[
 A(r)(n+1)^{F(r)}(s+1)^{G(r)}
\]
steps.
\end{corollary}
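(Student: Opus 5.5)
We argue by contradiction: we combine the hypothetical algorithm with Lemma~\ref{lem:bounded-output} to get an MMHS algorithm forbidden by Theorem~\ref{thm:eth-mmhs}. Suppose computable $A,F,G$ with $G(r)=o(r)$ and a uniform algorithm $\mathcal A$ achieve the bound $A(r)(n+1)^{F(r)}(s+1)^{G(r)}$. Without loss of generality we replace $G$ by a nondecreasing-in-$s$ form; the bound is already monotone in $s$ because $G\ge 0$. So we may set $D_r(n,s)=A(r)(n+1)^{F(r)}(s+1)^{G(r)}$, which meets the monotonicity hypothesis of Lemma~\ref{lem:bounded-output}.

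Given an MMHS instance $(\cH,k)$ with $k\ge 2$, put $r=k-1$. The degenerate cases are handled directly in time polynomial in $n+m$: $k=1$, no hyperedges, or $r\ge n$ (then $\trank(\cH)\ge k$ is impossible or trivially checkable). In the remaining case we build $\cL$, check $E(\cH)\subseteq\Tr(\cL)$, and run $\mathcal A$ on $(\cL,E(\cH))$, as in the proof of Lemma~\ref{lem:bounded-output}. This uses that $\cL$ has dimension at most $r$ and $n$ vertices, and that $E(\cH)$ is a promised family of $m$ distinct minimal transversals of $\cL$. If $\cH$ is not Sperner, we first replace it by $\cH_{\min}$ (same $\Tr$, at most $m$ edges, computable in time $O(m^2 n)$). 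This makes $E(\cH)$ a genuine family of distinct sets, and it also keeps the ``answer no'' shortcut valid. The total time is then
\[
O\bigl((n+1)^{k+1}(m+1)\bigr)+A(k-1)(n+1)^{F(k-1)}(m+1)^{G(k-1)}+O(m^2n).
\]

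This is at most $a(k)(n+1)^{f(k)}(m+1)^{g(k)}$ with $a(k)=A(k-1)+O(1)$, $f(k)=\max(k+2,F(k-1))$ and $g(k)=\max(2,G(k-1))$. These are computable, and $g(k)=o(k)$ because $G(k-1)=o(k)$. The whole algorithm is uniform in $k$, since $\mathcal A$ is uniform in $r$. Theorem~\ref{thm:eth-mmhs} forbids such an algorithm, which gives the contradiction under ETH.

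The main point requiring care is the $m$-dependence of the preprocessing. Lemma~\ref{lem:bounded-output} contributes a factor $(m+1)^1$, and Sperner-minimization contributes a factor $m^2$. Both are constant exponents, so they fit within $g(k)=o(k)$, because Theorem~\ref{thm:eth-mmhs} only requires $g=o(k)$. One might object that this overhead destroys the $m^{o(k)}$ form, but it does not. The only other subtlety is checking that the promise on $\mathcal S$ holds, i.e.\ distinct minimal transversals. Both the minimization to $\cH_{\min}$ and the explicit check $E(\cH)\subseteq\Tr(\cL)$ in the lemma ensure this.
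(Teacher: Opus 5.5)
Your proposal is correct and follows the paper's route: instantiate Lemma~\ref{lem:bounded-output} with $r=k-1$ and $D_r(n,s)=A(r)(n+1)^{F(r)}(s+1)^{G(r)}$ (monotone in $s$), observe that the resulting MMHS algorithm is uniform with computable $a,f$ and $m$-exponent $\max(2,G(k-1))=o(k)$, and contradict Theorem~\ref{thm:eth-mmhs}. One side remark is wrong: for Sperner $\cH$, a hyperedge outside $\Tr(\cL)$ forces $\Tr(\cL)\ne\cH_{\min}$ and hence $\cL\ne\Tr(\cH)$, so that shortcut must answer \emph{yes}, not no; alternatively, supplying $\mathcal S=E(\cH)\cap\Tr(\cL)$ (as in the proof of Theorem~\ref{thm:seth-fixed}) makes the promise automatic and removes the need for both the shortcut and your $\cH_{\min}$ preprocessing, which is nonetheless harmless.
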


In particular, assuming ETH, no uniform algorithm can enumerate all minimal transversals of hypergraphs with a complexity having sublinear exponent in term of the dimension in the number of solutions.

\begin{corollary}[Uniform output-polynomial lower bound under ETH]
\label{cor:eth-output}
Assume ETH. There are no computable nonnegative functions $A,F,G$
with $G(r)=o(r)$ and a uniform algorithm that enumerates all minimal
transversals of every dimension-at-most-$r$ hypergraph on $n$ vertices
and terminates in total time
\[
 A(r)(n+1)^{F(r)}(Q+1)^{G(r)},
 \qquad Q=|\Tr(\cG)|.
\]
\end{corollary}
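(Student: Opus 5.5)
We derive the corollary from Corollary~\ref{thm:eth-decision} by turning a total-time enumeration algorithm into a solver for the new-transversal decision problem. Assume, for contradiction, that a uniform algorithm $\mathcal E$ enumerates $\Tr(\cG)$ for every dimension-at-most-$r$ hypergraph $\cG$ on $n$ vertices within $T_r(n,Q)=A(r)(n+1)^{F(r)}(Q+1)^{G(r)}$ steps, with $G(r)=o(r)$.

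\textbf{Construction of the decision algorithm.} Given an instance $(\cG,\mathcal S)$ with $s=|\mathcal S|$, we run $\mathcal E$ on $\cG$ with a step budget $B=T_r(n,s+1)$, which is computable from $r,n,s$.
\begin{itemize}
\item If $\mathcal E$ outputs a set not in $\mathcal S$ within the budget, we answer \textsc{Yes}. To test membership, we first store $\mathcal S$ in a binary trie.
\item If $\mathcal E$ terminates within the budget having output only members of $\mathcal S$, we answer \textsc{No}.
\item If the budget is exhausted, we answer \textsc{Yes}.
\end{itemize}

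\textbf{Correctness.} Suppose the answer is \textsc{No}, so $\Tr(\cG)=\mathcal S$ by the promise. Then $Q=s$, so $\mathcal E$ halts within $T_r(n,s)\le B$ steps and outputs only members of $\mathcal S$; we answer \textsc{No} correctly. Suppose instead the answer is \textsc{Yes}. Then either $\mathcal E$ outputs a new transversal within the budget, or it runs past the budget, or it halts having output only members of $\mathcal S$. The last case is impossible, since $\mathcal E$ is a correct enumerator and must output an element of $\Tr(\cG)\setminus\mathcal S$ before halting. Hence every \textsc{No} exit is sound and every \textsc{Yes} answer is justified.

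\textbf{Running time.} The total time is $O(B\cdot\poly(n))$ for simulating $\mathcal E$ with a step counter and checking each output against the trie, plus $O(sn)$ for building the trie. Using $(s+2)\le 2(s+1)$, this is at most
\[
A'(r)(n+1)^{F'(r)}(s+1)^{G(r)+1},
\]
with $A'(r)=O\bigl(A(r)2^{G(r)}\bigr)$ and $F'(r)=F(r)+O(1)$, both computable. Since $G(r)+1=o(r)$, this contradicts Corollary~\ref{thm:eth-decision}. The additive $+1$ in the exponent causes no problem here, because only sublinearity in $r$ matters.

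\textbf{Main obstacle.} The delicate point is that $\mathcal E$ has only a total-time guarantee, with no bound on when individual outputs appear. This is why the argument relies on the budget $B=T_r(n,s+1)$ together with the observation that exceeding it certifies $Q\ge s+1$, rather than waiting for a new output. For this observation to hold, the bound must be nondecreasing in $Q$, which it is, and the budget must be computable, which is why $A$, $F$ and $G$ are assumed computable.
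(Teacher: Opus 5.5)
Your proposal is correct and takes essentially the paper's route. The paper states this corollary as an immediate consequence of Corollary~\ref{thm:eth-decision} without writing out the reduction, and your clocked simulation of the enumerator, with budget $A(r)(n+1)^{F(r)}(s+2)^{G(r)}$ and exponent $G(r)+1=o(r)$, is the standard argument that fills that gap, including the point that a total-time bound says nothing about when individual solutions appear.
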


\begin{remark}[Why ETH does not settle any particular fixed dimension]
Theorem~\ref{thm:eth-decision} and Corollary~\ref{cor:eth-output}
concern a \emph{single uniform algorithm}
with an effective bound valid as $r$ varies. The transfer above uses
that algorithm at dimension bound $r=k-1$ for every MMHS parameter
$k$. An enumeration algorithm for just one fixed dimension bound
would instead yield an MMHS algorithm for just one fixed threshold,
which is not excluded by Theorem~\ref{thm:eth-mmhs}. In particular,
Corollary~\ref{cor:eth-output} does \emph{not} rule out a separate
polynomial-delay algorithm for a particular fixed dimension, including
dimension three or dimension four. This distinction motivates the use of SETH
in the next section.
\end{remark}

\section{SETH: lower bounds for every fixed even dimension}

Applying the same transfer to the fixed-threshold SETH lower bound
in Theorem~\ref{thm:seth-mmhs} gives a lower bound for every fixed
even dimension, first for the new-transversal decision problem and
then for total enumeration time. No further reduction from SAT is needed.

\begin{theorem}[SETH lower bound for new-transversal decision]
\label{thm:seth-fixed}
Assume SETH. Let $t \geq 2$ be any fixed integer. For every real
constant $c<t$, there is no algorithm that, given a hypergraph $\cG$ of dimension at most $2t$ and
a family $\mathcal S\subseteq\Tr(\cG)$ of size $s$, decides
whether $\Tr(\cG)\setminus\mathcal S\ne\varnothing$ in time
\begin{equation}\label{eq:forbidden-decision}
 \poly(|\cG|)\,(s+1)^c.
\end{equation}

\end{theorem}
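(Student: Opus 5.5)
We argue by contradiction, combining Lemma~\ref{lem:bounded-output} with Theorem~\ref{thm:seth-mmhs} at $r=2t$. Suppose that for some fixed $t\ge 2$ and some $c<t$ there is an algorithm solving the new-transversal decision problem on hypergraphs of dimension at most $2t$ in time $p(|\cG|)(s+1)^c$, where $p$ is a polynomial. We may assume $c\ge 0$ and that $p$ is nondecreasing. Then this running time is nondecreasing in $s$, as Lemma~\ref{lem:bounded-output} requires.

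The one point needing care is that the bound is stated in terms of $|\cG|$, while Lemma~\ref{lem:bounded-output} expresses $D_r$ through $n$ and $s$. In the reduction of Lemma~\ref{lem:bounded-output}, the hypergraph passed to the decision algorithm is $\cL$, and its encoding size is $(n+1)^{r+O(1)}=(n+1)^{2t+O(1)}$ independently of $m$. Hence $p(|\cL|)\le (n+1)^{f}$ for a constant $f$ depending only on $t$ and $p$. We therefore set $D_{2t}(n,s)=(n+1)^{f}(s+1)^c$. The supplied family is $E(\cH)$, so its size is $s=m$.

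By Lemma~\ref{lem:bounded-output}, for any $n$-vertex, $m$-edge hypergraph $\cH$ we can decide whether $\trank(\cH)\ge 2t+1$ in time
\[
O\bigl((n+1)^{2t+2}(m+1)\bigr)+(n+1)^{f}(m+1)^{c}.
\]
If $c\ge 1$, this is at most $a\,(n+1)^{f'}(m+1)^{c}$ for constants $a,f'$. If $c<1$, the first term contributes the factor $(m+1)^1$. Since $t\ge 2$, we have $1<t$, so we may replace $c$ by $c'=\max(c,1)<t$. In either case we get an algorithm of the form~\eqref{eq:forbidden-seth-mmhs} with exponent $c'<t$ on $m$, which contradicts Theorem~\ref{thm:seth-mmhs}.

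The degenerate cases excluded in the proof of Lemma~\ref{lem:bounded-output} need separate treatment. These are $r\ge n$, where the answer is trivially no, and $\cH$ having no hyperedge, which is decidable directly. Both are handled in polynomial time and do not affect the bound.

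I expect the main obstacle to be the bookkeeping of the input-size measure rather than any combinatorial difficulty. One must check that the polynomial factor depends only on $|\cL|$ and not on the size of the supplied family $E(\cH)$, so that the dependence on $m$ enters only through $(s+1)^c$. One must also check that the $O((n+1)^{2t+2}(m+1))$ preprocessing cost, which is linear in $m$, does not push the exponent of $m$ to $t$ or beyond. The latter is exactly where the hypothesis $t\ge 2$ is used.
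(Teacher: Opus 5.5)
Your proposal is correct and follows the paper's proof. Both apply Lemma~\ref{lem:bounded-output} with $r=2t$, use $|\cL|=(n+1)^{2t+O(1)}$ to turn $\poly(|\cL|)$ into a polynomial in $n$ alone so that $m$ enters only through $(s+1)^c$ with $s\le m$, and then contradict Theorem~\ref{thm:seth-mmhs}. Your replacement of $c$ by $\max(c,1)<t$ is more careful than the paper: its final inequality absorbs the $O((n+1)^{2t+2}(m+1))$ preprocessing term into $(n+1)^{h'}(m+1)^c$, which is false as written when $c<1$. One small imprecision on your side: the family that respects the promise is $E(\cH)\cap\Tr(\cL)$, as the paper writes, not $E(\cH)$ itself, but this only lowers $s$ and is harmless by monotonicity.
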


\begin{proof}
Suppose such an algorithm exists for fixed $t \geq 2$ and $c<t$.

Given an arbitrary MMHS input $\cH$ with $n$ vertices and $m$
edges, construct
$\cL=\{S\in\Tr(\cH):|S| \leq 2t\}$ as in
Lemma~\ref{lem:bounded-output}. This is a dimension-at-most-$2t$
hypergraph with encoding size $(n+1)^{2t+O(1)}$. Since $t$ is fixed,
the input-size factor $\poly(|\cL|)$ in the assumed
decision bound is at most $C(n+1)^h$ for constants $C,h$ depending
only on the algorithm and $t$. The supplied family constructed in
Lemma~\ref{lem:bounded-output} is
$\mathcal S=E(\cH)\cap\Tr(\cL)$ and has size at most $m$.

Lemma~\ref{lem:bounded-output}, with $r=2t$, therefore gives an
algorithm deciding whether $\trank(\cH) \geq 2t+1$ in time
\[
 O\bigl((n+1)^{2t+2}(m+1)\bigr)
 +C(n+1)^h(m+1)^c
  \leq  C'(n+1)^{h'}(m+1)^{c},
\]
for suitable constants $C',h'$. Because $t \geq 2$ and $c<t$, this contradicts
Theorem~\ref{thm:seth-mmhs}.
\end{proof}

\begin{corollary}[SETH lower bound for output-polynomial enumeration]
\label{cor:seth-output}
Assume SETH. For every fixed integer $t \geq 2$ and every real
constant $c<t$, there is no algorithm that enumerates all minimal
transversals of every dimension-at-most-$2t$ hypergraph
$\cG$ and terminates in total time
\begin{equation}\label{eq:forbidden-enum}
 \poly(|\cG|)(Q+1)^c,
 \qquad Q=|\Tr(\cG)|.
\end{equation}

\end{corollary}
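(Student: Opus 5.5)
We derive Corollary~\ref{cor:seth-output} from Theorem~\ref{thm:seth-fixed} by turning a fast enumerator into a fast new-transversal decider. Suppose an enumeration algorithm $\mathcal E$ exists for dimension-at-most-$2t$ hypergraphs, with total time at most $p(|\cG|)(Q+1)^c$ for a polynomial $p$ and a constant $c<t$. Given an instance $(\cG,\mathcal S)$ of the new-transversal decision problem with $|\mathcal S|=s$, we run $\mathcal E$ on $\cG$ with a step budget of
\[
 B=p(|\cG|)(s+2)^c .
\]

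The analysis has two cases. If $\mathcal S=\Tr(\cG)$, then $Q=s$, so $\mathcal E$ terminates within $p(|\cG|)(s+1)^c\le B$ steps. Every output then lies in $\mathcal S$, and we answer \textsc{No}. If $\mathcal S\subsetneq\Tr(\cG)$, there are two possibilities:
\begin{itemize}
\item $\mathcal E$ outputs some set not in $\mathcal S$ within the budget, and we answer \textsc{Yes};
\item $\mathcal E$ fails to terminate within $B$ steps. Since $Q=s$ would have forced termination by then, $Q\ge s+1$, and we again answer \textsc{Yes}.
\end{itemize}
If $\mathcal E$ terminates and has output only elements of $\mathcal S$, then $\mathcal S=\Tr(\cG)$ because $\mathcal E$ is complete, and we answer \textsc{No}. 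Hence the rule is: answer \textsc{Yes} exactly when the budget is exhausted or an output outside $\mathcal S$ appears.

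It remains to check the decider's running time. Membership of each output in $\mathcal S$ is tested with a trie over incidence vectors. Each output has size at most $n$, and $\mathcal E$ produces at most $B$ outputs. So the overhead is $\poly(|\cG|)\cdot B$ plus $O(sn)$ to build the trie.

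The main obstacle is that the budget $B$ requires knowing $p(|\cG|)$, which is a fixed polynomial that we may hardcode, and computing $(s+2)^c$ for real $c$. Both are routine, since a rational upper bound on $c$ that is still below $t$ suffices. The overall bound is $\poly(|\cG|)(s+2)^{c}\le \poly(|\cG|)\,2^c(s+1)^c$, which has the form \eqref{eq:forbidden-decision} with the same exponent $c<t$. This contradicts Theorem~\ref{thm:seth-fixed} under SETH. As a remark, the delay convention from the preliminaries turns a polynomial-delay algorithm into total time $\poly(|\cG|)(Q+1)$, that is $c=1<2=t$, which excludes polynomial delay for dimension at most four.
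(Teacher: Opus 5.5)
Your proposal is correct and matches the paper's intended derivation. The paper states this corollary without a separate proof, as the enumeration counterpart of Theorem~\ref{thm:seth-fixed}, and a clocked run of the total-time enumerator with budget $B$ is exactly how a new-transversal decider with the same exponent $c$ is obtained. One detail should be made explicit: the harmless normalization $c\ge 1$ (replace $c$ by $\max(c,1)$, still $<t$ since $t\ge 2$). You use it implicitly both for $B\ge p(|\cG|)(s+1)^c$ and to absorb the $O(sn)$ trie-construction cost into $\poly(|\cG|)(s+1)^c$.
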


\begin{corollary}[Incremental delay and polynomial delay]
\label{cor:seth-delay}
Assume SETH. For fixed $t \geq 2$, no algorithm enumerates minimal
transversals of dimension-at-most-$2t$ hypergraphs with a bound
\[
 \poly(|\cG|)\,(i+1)^\delta
\]
on the $i$-th delay (including initial and final delays) whenever
$\delta<t-1$. In particular, \emph{polynomial-delay enumeration
is impossible for dimension at most four}, and hence for every fixed
dimension bound $r \geq 4$.
\end{corollary}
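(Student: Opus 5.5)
We derive Corollary~\ref{cor:seth-delay} from Corollary~\ref{cor:seth-output} by summing the delays. Suppose, for a fixed $t \geq 2$ and some $\delta<t-1$, an algorithm enumerates $\Tr(\cG)$ for every dimension-at-most-$2t$ hypergraph $\cG$. Assume its $i$-th delay, including the initial and final delays, is at most $p(|\cG|)(i+1)^\delta$ for a polynomial $p$. We may take $\delta \geq 0$ without loss of generality, since replacing $\delta$ by $\max(\delta,0)$ keeps the hypothesis valid and still gives $\delta<t-1$ (because $t-1\geq 1$).

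The algorithm produces $Q=|\Tr(\cG)|$ outputs, so there are exactly $Q+1$ delay intervals: the initial one, the $Q-1$ between consecutive outputs, and the final one. Indexing them by $i=0,\dots,Q$, the total running time is at most
\[
 \sum_{i=0}^{Q} p(|\cG|)(i+1)^\delta \leq p(|\cG|)(Q+1)^{\delta+1}.
\]
This is the total-time conversion recorded in the preliminaries.

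Set $c=\delta+1$. Then $c<t$, and the algorithm terminates in total time $\poly(|\cG|)(Q+1)^c$ on every dimension-at-most-$2t$ hypergraph. This contradicts Corollary~\ref{cor:seth-output}, so no such algorithm exists under SETH.

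For the polynomial-delay statement, take $t=2$ and $\delta=0<1=t-1$. This excludes polynomial delay for hypergraphs of dimension at most $4$. For a fixed bound $r\geq 4$, the class of dimension-at-most-$r$ hypergraphs contains the class of dimension-at-most-$4$ hypergraphs. A polynomial-delay algorithm for the larger class would therefore also run with polynomial delay on the smaller one, which we have just excluded.

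The only step needing care is the bookkeeping of the delay intervals: the initial and final delays must both be counted, so that the sum has $Q+1$ terms and the case $Q=0$ is covered, which it is because $(Q+1)^c\geq 1$. Beyond that, the argument is a direct application of Corollary~\ref{cor:seth-output}.
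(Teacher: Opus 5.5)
Your proof is correct and matches the paper's intended argument: summing the $Q+1$ delays (initial, intermediate, and final) gives the total-time bound $p(|\cG|)(Q+1)^{\delta+1}$ recorded in the preliminaries, and setting $c=\delta+1<t$ contradicts Corollary~\ref{cor:seth-output}; the dimension-four case then follows from $t=2$, $\delta=0$, and larger $r$ follow by class inclusion. Your handling of negative $\delta$ and of $Q=0$ is harmless extra care that the paper leaves implicit.
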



\begin{remark}[The dimension-three boundary]
Theorem~\ref{thm:seth-mmhs} starts at threshold $2t+1=5$ and
therefore yields an enumeration obstruction starting at dimension
$2t=4$. The transfer at dimension three would require an appropriate
MMHS lower bound at threshold four, which is not established here.
For $t=1$, even the construction time in Lemma~\ref{lem:padding}
has exponent $1$ in $N$, so it gives no SETH contradiction.
The method therefore does not exclude polynomial delay for
dimension-at-most-three hypergraphs. For dimension two, polynomial delay
is classical via minimal vertex covers and maximal independent sets
\cite{TsukiyamaEtAl1977,JohnsonPapadimitriouYannakakis1988}.
\end{remark}

\section{Discussions}
For dimension bounded by $r$, Eiter and Gottlob
\cite{EiterGottlob1995}, with subsequent improvements and analyses
\cite{EiterGottlobMakino2003,EiterMakinoGottlob2008}, established
incremental-polynomial enumeration of  $i^{r+2}n^{O(1)}$
delay that would imply a much faster MMHS algorithm.

For hypergraphs with VC-dimension $<h$, the procedure in
\cite[Theorem~1]{Mary2026}, given $i$
previously generated minimal transversals, finds a new one or
certifies completion in time
\begin{equation}\label{eq:vc-upper}
 O \left(2^h(ni)^{h+1}+n^{2h+1}i\right).
\end{equation}
For VC-dimension at most $d_0$, put $h=d_0+1$:
this is a next-solution bound whose exponent in the supplied-family
size $s=i$ is at most $d_0+2$. It also solves the new-transversal
decision problem within that bound. Summing over all $Q$ solutions
and including termination gives a total-time bound with exponent
at most $d_0+3$ in $Q+1$.
Our SETH lower bound excludes decision exponents in $s+1$ and
total enumeration exponents in $Q+1$ below
$\lfloor d_0/2\rfloor$ (and hence delay exponents below
$\lfloor d_0/2\rfloor-1$).
Thus the known dimension- and VC-dimension-dependent polynomial
exponents are optimal \emph{up to a constant factor asymptotically}.

\paragraph{Connection to Schirneck's equivalences.}
Theorem~4 of \cite{Schirneck2026} establishes quantitative
connections between deciding whether a large minimal hitting set exists,
recognizing $k$-conformality, and incremental enumeration of
maximal hypercliques and minimal hitting sets in bounded-dimension
hypergraphs. Our Theorem~\ref{thm:eth-mmhs} rules out the
MMHS running-time target $n^{f(k)}m^{o(k)}$ under ETH, and
Theorem~\ref{thm:eth-decision} rules out a uniform dimension-dependent
new-transversal decision exponent $o(r)$. By
Corollary~\ref{cor:eth-output}, the same obstruction holds for total
enumeration time, without an incremental-time assumption.
Theorem~\ref{thm:seth-fixed} and Corollary~\ref{cor:seth-output}
further give fixed-dimension SETH obstructions, which are not
consequences of ETH alone. These results are particularly useful
when comparing a single fixed-dimension enumeration algorithm to the
bounds in \cite{EiterGottlob1995,Mary2026}.

\bibliographystyle{alpha}
\bibliography{transversal_enumeration_lower_bounds}
\end{document}